\begin{document}

\title{Deciding the Winner of an Arbitrary Finite Poset Game is PSPACE-Complete}
\author{Daniel Grier\thanks{This work was funded in part by the South Carolina Honors College Science Undergraduate Research Funding Program.  This work was also supported by the Barry M. Goldwater Scholarship.}}
\institute{University of South Carolina \\ \email{grierd@email.sc.edu}}
\date{} 

\maketitle
\begin{abstract}
A poset game is a two-player game played over a partially ordered set (poset) in which the players alternate choosing an element of the poset, removing it and all elements greater than it.  The first player unable to select an element of the poset loses.  Polynomial time algorithms exist for certain restricted classes of poset games, such as the game of Nim.  However, until recently the complexity of arbitrary finite poset games was only known to exist somewhere between $\NC^1$ and $\PSPACE$.   We resolve this discrepancy by showing that deciding the winner of an arbitrary finite poset game is $\PSPACE$-complete.  To this end, we give an explicit reduction from Node Kayles, a $\PSPACE$-complete game in which players vie to chose an independent set in a graph.
\end{abstract}
 \section{Introduction}
A partially ordered set, or poset, is a set of elements with a binary relation (denoted $\le$) indicating the ordering of elements that is reflexive, transitive, and antisymmetric.  A poset game is an impartial two-player game played over some poset.  Each turn, a player selects an element of the poset, removing it and all elements greater than it.  A player loses when faced with the empty set.  Equivalently, the last player able to select an element wins.  We will assume that the number of elements in the poset is finite, which ensures that the game will eventually end in such a manner.

Poset games have been studied in various forms since a complete analysis of the game of Nim was given in 1901 by C. Bouton \cite{bouton:1901}.  Other poset games with explicit polynomial time strategies include Von Neumann's Hackendot \cite{ulehla:1980} and impartial Hackenbush on trees \cite{berlekamp:2004}.  The above games have no induced subposet of cardinality four that form an `N'.  In fact, it is shown in \cite{deuber:1996} that all N-free poset games can be solved in polynomial time.  

However there are several other well-studied poset games played over specific structures with unknown complexity \cite{fraenkel:2000}.  Perhaps the most popular is the game of Chomp, which was introduced by Gale in 1974 and is played on the cross product of two Nim stacks \cite{gale:1974}.  Work by Byrnes \cite{byrnes:2003} shows that certain Chomp positions exhibit periodic behavior, but a quick general solution still does not exist.  In Subset Takeaway \cite{gale:1982}, introduced by Gale in 1982, the players take turns removing a set and all its supersets from a collection of sets.  In Shuh's Game of Divisors \cite{schuh:1952}, the players alternate removing a divisor of $n$ and its multiples.  In fact, both Chomp and Subset Takeaway are special cases of the Game of Divisors, with $n$ the product of at most two primes and $n$ square-free, respectively.

In this paper, we discuss the complexity of deciding the winner of an arbitrary finite poset game, which has remained a longstanding question in the attempt to classify the tractability of combinatorial games \cite{fraenkel:2000,fraenkel:2004}.  Let PG be the language consisting of poset games with a winning strategy for the first player.  In \cite{kalinich:2011}, Kalinich shows that PG is at least as hard as $\NC^1$ under $\AC^0$ reductions by creating a correspondence with boolean circuits.  Weighted poset games, which are a generalization of poset games, were shown to be $\PSPACE$-complete in \cite{ito:2011}.  That result, which uses a completely different technique than the one described in this paper, along with another proof in \cite{soltys:2011}, clearly show that PG is in $\PSPACE$.  We show that PG is indeed \PSPACE-complete.
 
In \cite{schaefer:1978}, Schaefer shows that the two-player game Node Kayles is $\PSPACE$-complete.  In Node Kayles, the players take turns removing a vertex and all neighbors of that vertex from a graph.  The first player unable to move loses.  

In Section~\ref{section:constructions} we will give two constructions that serve as the basis for a reduction from Node Kayles to PG.  We will then give a variety of lemmas demonstrating the desirable properties of these constructions in Section~\ref{section:lemmas}.  In Section~\ref{section:main_theorem} we will combine these lemmas to show that PG is $\PSPACE$-complete.

\section{Constructions}
\label{section:constructions}
Below we will give two constructions, $\psi$ and $\varphi$.  When applied in succession, they reduce an instance of Node Kayles into an instance of PG such that the winning player is preserved.  Let G be the class of finite simple graphs and $P$ be the class of finite posets.  For $g \in G$ we will write $g = (V, E)$ where $V$ is the set of vertices and $E$ is the set of edges.  We will use $K_n$ to denote the complete graph on $n$ vertices.
\subsection{$\psi$-construction}
Define $\psi : G  \rightarrow G$ such that
\begin{itemize}
\item $\left|E\right|$ is odd $\implies \psi(g) = g \cupdot K_2 \cupdot K_2$
\item $\left|E\right|$ is even $\implies \psi(g) = g \cupdot K_2 \cupdot K_4$
\end{itemize}

\begin{figure}[!h]
\centering
\includegraphics[scale=.9]{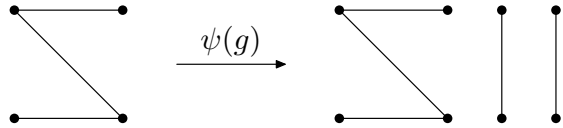}
\caption{Example of $\psi$-construction when $\left|E\right|$ is odd.}
\end{figure}

\begin{figure}[!h]
\centering
\includegraphics[scale=.9]{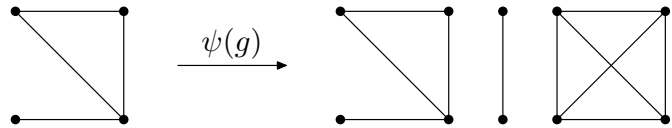}
\caption{Example of $\psi$-construction when $\left|E\right|$ is even.}
\end{figure}

This construction serves two purposes.  First, the edge cardinality of the resulting graph is always odd.  Second, for every vertex, there is an edge that is not incident to it.  It is also important to note that the winning player of the Node Kayles game does not change (see Lemma~\ref{psi_lemma}).

\subsection{$\varphi$-construction}
Let $\varphi : G \rightarrow P$ be a function from simple graphs to posets, where $\varphi(g) = A \cup B \cup C$ is a three-level poset with disjoint levels $A$, $B$, and $C$ from lowest to highest.  That is, for any $a \in A$, $b \in B$, and $c \in C$, $b \not\leq a$, $c \not\leq b$, and $c \not\leq a$.  Furthermore, any two elements on the same level are incomparable.  \\ \\
Fix $g = (V, E)$.  The elements of the poset $\varphi(g)$ are as follows:  
\begin{itemize} \itemsep 0pt
\item The elements of $C$ are the edges of $g$.  That is, $C = E$.
\item The elements of $B$ are the vertices of $g$.  That is, $B = V$.
\item The elements of $A$ are copies of the edges of $g$. To represent this, let $\gamma : C \rightarrow A$ be a 1-1 correspondence between the elements of $C$ and the elements of $A$.   
\end{itemize}
For each edge $e = (v_1, v_2) $ and $b \in B$, the $\le$ relationship of the poset $\varphi(g)$ is as follows:  
\begin{itemize} \itemsep 0pt
\item $b \le e$ iff $b = v_1$ or $b = v_2$.  That is, $e$ lies directly above its endpoints in $B$.
\item $\gamma(e) \le b$ iff $b \neq v_1$ and $b \neq v_2$.  That is, $\gamma(e)$ is less than all the elements in $B$ except the endpoints of $e$.
\end{itemize}

\begin{figure}[!h]
\centering
\includegraphics{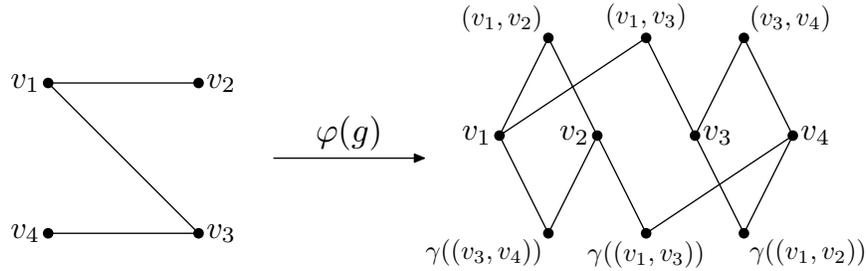}
\caption{Example $\varphi$-construction.  Note that the left picture is an undirected graph representing a Node Kayles game, and the right picture is a Hasse Diagram representing the resultant poset game.}
\end{figure}

\section{Lemmas}
\label{section:lemmas}
\begin{lemma}
\label{psi_lemma}
Player 1 wins the Node Kayles Game on $g$ iff Player 1 wins the Node Kayles Game on $\psi(g)$.
\end{lemma}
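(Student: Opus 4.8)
The plan is to treat Node Kayles as what it is: a finite impartial game under the normal play convention, so that the winner of a position is determined by its Sprague--Grundy value (nimber), with Player~1 winning exactly when this value is nonzero. The first step is to record that Node Kayles played on a disjoint union of graphs is the disjoint \emph{sum} of the Node Kayles games on the individual components, since selecting a vertex in one component and removing its neighbors never affects another component. By the Sprague--Grundy theorem the nimber of such a sum is the bitwise XOR ($\oplus$) of the nimbers of its summands. Thus it suffices to understand what the extra complete graphs appended by $\psi$ contribute to the nimber of $g$.

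Next I would compute the nimber of a complete graph. In $K_n$ every vertex is adjacent to every other, so selecting any vertex deletes the whole component in one move; the only option from $K_n$ (for $n \ge 1$) leads to the empty graph, whose nimber is $0$. Hence the nimber of $K_n$ is $\mathrm{mex}\{0\} = 1$ for every $n \ge 1$. In particular $K_2$ and $K_4$ both have nimber $1$, so the choice between them in the definition of $\psi$ is irrelevant to the winner --- it serves only the separate purpose of forcing the edge cardinality to be odd.

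Finally I would combine these observations. In either case of the definition, $\psi$ appends to $g$ exactly two complete graphs, each of nimber $1$. Writing $G(\cdot)$ for the nimber, this gives $G(\psi(g)) = G(g) \oplus 1 \oplus 1 = G(g)$, since $1 \oplus 1 = 0$. Therefore $\psi$ preserves the nimber and hence the identity of the winning player, which is exactly the claim. The one point that must be handled with care --- and the reason two complete graphs are added rather than one --- is precisely this cancellation: appending a single $K_n$ would change the nimber from $G(g)$ to $G(g) \oplus 1$ and could flip the winner, whereas appending a matched pair is a no-op. I expect this to be the only real obstacle. It can alternatively be established without invoking Sprague--Grundy, by a direct pairing (Tweedledum--Tweedledee) strategy: the player who wins Node Kayles on $g$ plays her winning strategy there and answers any move clearing one of the two complete graphs by immediately clearing the other. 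The thing to verify in that elementary version is that this response is always available and always leaves the opponent to move --- i.e.\ that a pair of single-move components forms a second-player win --- after which the winning player on $g$ is seen to remain the winning player on $\psi(g)$.
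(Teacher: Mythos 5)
Your proposal is correct and in essence matches the paper's proof: the paper's primary argument is exactly the pairing strategy you offer as an ``alternative'' (follow the winning strategy on $g$, and answer any move in one appended complete graph by immediately clearing the other), and the paper also notes, just as you do, that in Sprague--Grundy terms the disjoint union of the two complete graphs has Grundy number zero and therefore cannot change the winner. The only difference is one of emphasis --- you lead with the explicit nimber computation (each $K_n$ has nimber $\mathrm{mex}\{0\}=1$, so the appended pair contributes $1 \oplus 1 = 0$), which is actually spelled out in more detail than the paper's one-line Grundy remark, while the paper leads with the strategy argument --- so both proofs contain the same two arguments with roles swapped.
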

\begin{proof}
Suppose that the Node Kayles game played on $g$ is a win for Player 1, who we will assume by convention is the first to play.  We will show that this gives Player 1 an explicit winning strategy on $\psi(g)$.  Player 1 first chooses the winning move in $g$.  If Player 2 chooses a vertex in $g$, Player 1 can always respond with another move in $g$ because Player 1 has the winning strategy on $g$.  If Player 2 chooses a vertex in one of the complete graphs, Player 1 can respond with a vertex in the other complete graph, removing both complete graphs from consideration for the remainder of the game.  Because Player 1 can respond to any move of Player 2, Player 1 will eventually win.  Of course, this argument holds if Player 2 has the winning strategy in $g$, and similarly shows that a player has a winning strategy on $g$ if he has a winning strategy on $\psi(g)$.

In terms of Sprague-Grundy theory, the disjoint union of the two complete graphs has Grundy number zero.  Adding a game of Grundy number zero to an existing game does not change the winner of the original game \cite{berlekamp:2004}.   In particular, the Grundy number of $g$ is equal to the Grundy number of $\psi(g)$.  \qed
\end{proof}

Let $g = (V, E)$ be a finite simple graph and $e = (v_1, v_2)$ be an arbitrary edge in $\psi(g)$.  For the following lemmas, assume that two players are playing the poset game on $\varphi(\psi(g))$.  Also assume, for simplicity, that the players are Alice and Bob.
\begin{lemma}
\label{successful_challenge_lemma}
Assume no moves in $A$ or $C$ have yet been chosen.  If both $v_1$ and $v_2$ have been chosen, then $\gamma(e)$ is a winning move.
\end{lemma}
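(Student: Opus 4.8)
The plan is to pin down the exact board state forced by the hypotheses and then compute the position that results from playing $\gamma(e)$, which I expect to collapse to a pure antichain whose parity is dictated by the $\psi$-construction.

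First I would record two structural facts about $\varphi(\psi(g))$. Every element of $A$ is minimal: nothing in $B \cup C$ lies below $A$, and the elements of $A$ are pairwise incomparable, so an element of $A$ can leave the board only by being chosen directly. Since no move in $A$ has been made, all of $A$ is still present. Similarly, a vertex $b \in B$ has no element of $B$ or $C$ below it, so its only strictly smaller elements lie in $A$; with no $A$-move played, a vertex is removed exactly when it is chosen. Hence the removed vertices are precisely those selected so far, and in particular every vertex still on the board is distinct from $v_1$ and $v_2$.

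Next I would analyze the move $\gamma(e)$. Choosing it deletes $\gamma(e)$ together with every element above it, namely every vertex other than $v_1, v_2$ and every edge lying above such a vertex. Because $v_1$ and $v_2$ have already been chosen, every surviving vertex $b$ satisfies $b \neq v_1, v_2$ and therefore lies above $\gamma(e)$, so all of $B$ is wiped out. Consequently every surviving edge is wiped out as well: an edge $e' \neq e$ has an endpoint $u \notin \{v_1, v_2\}$ (as $e$ is the unique edge with both endpoints in $\{v_1, v_2\}$, $g$ being simple), and $\gamma(e) \le u \le e'$ forces $e'$ off the board, while $e$ itself is already gone since it lies above the chosen vertex $v_1$. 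What remains is exactly the antichain $A \setminus \{\gamma(e)\}$.

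Finally I would invoke parity. On an antichain every move removes a single element, so such a position is a loss for the player to move precisely when the number of elements is even. The surviving antichain has $\left|E(\psi(g))\right| - 1$ elements, and by the $\psi$-construction $\left|E(\psi(g))\right|$ is odd, so this count is even; the opponent is thereby left in a losing position, and $\gamma(e)$ is a winning move. The one point demanding care is the collapse to a clean antichain: it uses crucially that \emph{both} endpoints are already gone, for if one survived it would remain on the board together with all the $A$-elements below it, producing a nontrivial subposet for which the simple parity count would fail. I would therefore flag this reduction to a bare antichain as the step to justify most carefully.
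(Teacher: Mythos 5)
Your proof is correct and takes essentially the same approach as the paper: playing $\gamma(e)$ collapses the position to the antichain $A \setminus \{\gamma(e)\}$, whose cardinality $\left|E(\psi(g))\right| - 1$ is even by the $\psi$-construction, leaving the opponent in a losing parity position. The paper's proof is a one-sentence compression of exactly this argument; your version merely supplies the structural details (that $A$ and the unchosen vertices are intact, and that the move wipes out all of $B$ and $C$) that the paper leaves implicit.
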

\begin{proof}
Because the $\psi$-construction always leaves a graph with an odd number of edges, choosing $\gamma(e)$ leaves an even number of incomparable points in $A$.  \qed
\end{proof}

\begin{lemma}
\label{bad_challenge_lemma}
Assume no moves in $A$ or $C$ have yet been chosen.  If exactly one of $v_1$ and $v_2$ has been chosen, then $\gamma(e)$ is a losing move.
\end{lemma}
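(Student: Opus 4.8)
The plan is to exhibit an explicit winning response for Bob to Alice's selection of $\gamma(e)$, which shows that $\gamma(e)$ cannot win for Alice and is therefore a losing move. First I would fix the board state just before Alice moves. Since no element of $A$ or $C$ has been chosen, every point of $A$ is still present, and the only points of $B$ and $C$ that have left the board are those removed as up-sets of previously chosen vertices; in particular the surviving edges in $C$ are exactly those incident to no chosen vertex. Assume without loss of generality that $v_1$ has been chosen and $v_2$ has not, and note that $e$ itself is already gone, having been deleted together with $v_1$.

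Next I would compute the effect of Alice choosing $\gamma(e)$. By definition $\gamma(e) \le b$ for every $b \in B$ with $b \neq v_1$ and $b \neq v_2$, so the move deletes every surviving vertex except $v_2$, which is an endpoint of $e$ and hence incomparable to $\gamma(e)$. By transitivity an edge lies above $\gamma(e)$ as soon as it has an endpoint other than $v_1$ and $v_2$. The step requiring the most care is checking that this accounts for all surviving edges: since each surviving edge avoids the chosen vertex $v_1$ and the graph is simple, no surviving edge can have both endpoints in $\{v_1, v_2\}$, so every one of them has an endpoint distinct from $v_1$ and $v_2$ and is therefore deleted. Thus Alice's move leaves Bob with the single vertex $v_2$ together with the antichain $A \setminus \{\gamma(e)\}$, with all of $C$ gone.

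Finally I would have Bob select $v_2$. Because $C$ is now empty, $v_2$ is maximal, so this removes only $v_2$ and hands Alice the antichain $A \setminus \{\gamma(e)\}$. The $\psi$-construction guarantees that $\left|E\right|$ is odd, so this antichain has even cardinality $\left|E\right| - 1$; a game on an antichain of even size is a loss for the player to move, since each move removes exactly one of the mutually incomparable points and an even count forces the mover to be the one left facing the empty board. Hence Alice loses, so $\gamma(e)$ is indeed a losing move. This mirrors Lemma~\ref{successful_challenge_lemma}: there both endpoints were gone and no $B$-vertex survived Alice's move, so she handed the fatal even antichain directly to Bob, whereas here exactly one $B$-vertex survives and Bob deletes it to restore that same even antichain to Alice.
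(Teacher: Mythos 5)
Your proof is correct and takes essentially the same approach as the paper's: after Alice plays $\gamma(e)$, Bob answers by taking the lone surviving element of $B$, leaving Alice an even-sized antichain in $A$, which is a loss for the player to move. You merely spell out a detail the paper leaves implicit (that Alice's move also wipes out all of $C$, since every surviving edge has an endpoint outside $\{v_1, v_2\}$), which is a fine addition but not a different argument.
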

\begin{proof}
First notice that $e$ has already been removed from the poset because both $v_1 \le e$ and $v_2 \le e$.  Because $\gamma(e) \not\leq v_1$ and $\gamma(e) \not\leq v_2$, choosing $\gamma(e)$ leaves a single point (either $v_1$ or $v_2$) in $B$. Thus, the next player can win by choosing the lone element in $B$, leaving an even number of incomparable points in $A$.  \qed
\end{proof}

\begin{lemma}
\label{worse_challenge_lemma}
Assume no moves in $A$ or $C$ have yet been chosen.  If neither $v_1$ nor $v_2$ has been chosen, then both $e$ and $\gamma(e)$ are losing moves.
\end{lemma}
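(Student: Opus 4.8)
The plan is to treat the two candidate moves $e$ and $\gamma(e)$ symmetrically and funnel both into a single target position. First I would record what each move does under the stated hypotheses (no element of $A$ or $C$ has been touched, and neither $v_1$ nor $v_2$ has been chosen): in particular $e$ is still present because both its endpoints survive, and playing $\gamma(e)$ removes $\gamma(e)$ together with everything above it, which is every surviving vertex other than $v_1,v_2$ and hence every surviving edge other than $e$ (any edge $e'\neq e$ has an endpoint distinct from $v_1,v_2$, so $e'\ge\gamma(e)$). The key observation is a pairing between the two moves: if the mover, say Alice, plays $e$, then Bob answers with $\gamma(e)$; if Alice plays $\gamma(e)$, then Bob answers with $e$. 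In either order the net effect is to delete $e$, $\gamma(e)$, all remaining vertices except $v_1,v_2$, and all remaining edges, so both lines reach the \emph{same} position
$$P = \{v_1,v_2\}\cup(A\setminus\{\gamma(e)\}),$$
with $C$ now empty and Alice to move. It therefore suffices to prove that $P$ is a loss for the player to move (a P-position); this at once makes both $e$ and $\gamma(e)$ losing moves.

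To analyze $P$ I would first describe its order structure. Writing $A' = A\setminus\{\gamma(e)\}$, the set $A'$ is in bijection with the edges of $\psi(g)$ other than $e$, so $|A'| = |A|-1$ is even because the $\psi$-construction forces $|A|$ to be odd. Each $\gamma(e')\in A'$ satisfies $\gamma(e')\le v_i$ exactly when $e'$ is not incident to $v_i$, and since $e$ is the only edge incident to both $v_1$ and $v_2$, every point of $A'$ lies below $v_1$, below $v_2$, or below both, and none lies below neither. Furthermore, the $\psi$-construction guarantees that every vertex has a non-incident edge, so each of $v_1,v_2$ covers at least one point of $A'$; that is, neither top point is isolated. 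Thus $P$ is exactly two incomparable maximal points over an even antichain, with every bottom point covered and neither top isolated.

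The heart of the argument, and the step I expect to be the main obstacle, is showing that this $P$ is a P-position; I would do it with Sprague--Grundy values, using two gadget computations. An antichain of $k$ points has Grundy value $k \bmod 2$, and a single maximal element lying over a $k$-antichain with $k\ge 1$ has Grundy value $2$ (playing the top leaves a $k$-antichain and playing a bottom leaves a $(k-1)$-antichain, exposing values $0$ and $1$). I then enumerate the moves from $P$: playing $v_i$ leaves a single-top-over-antichain gadget disjoint from the now-isolated private points of $v_i$; playing a point below both tops deletes $v_1,v_2$ and leaves an antichain of the odd size $|A'|-1$; and playing a point below exactly one $v_i$ deletes that top and leaves a gadget together with an antichain. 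A short parity case analysis, driven by $|A'|$ being even and by each $v_i$ covering a point, then shows that every resulting position has Grundy value in $\{1,2,3\}$, so $0$ is never reachable and $P$ has Grundy value $0$. Hence $P$ is a P-position and both $e$ and $\gamma(e)$ are losing moves. The delicate case is the private-point move, where the winning reply is not a simple mirror and must be read off from the Grundy bookkeeping rather than named explicitly; the non-isolation of both tops is precisely what rules out the stray singleton that would otherwise let the mover escape.
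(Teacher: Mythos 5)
Your proof is correct, and its skeleton is the same as the paper's: both pair the moves $e$ and $\gamma(e)$ (whichever one Alice plays, Bob answers with the other), funnel into the single position $P=\{v_1,v_2\}\cup\bigl(A\setminus\{\gamma(e)\}\bigr)$ with Alice to move, and then do the same three-way case split on Alice's options (a top $v_i$, a point below both tops, a point below exactly one top). Where you diverge is in how $P$ is shown to be a loss for the mover. The paper argues by explicit strategy: against $v_1$ Bob mirrors with $v_2$, leaving an even antichain; against a point below both tops Bob inherits an odd antichain; and against a point below exactly one top Bob replies with $\gamma(e_2)$, where $e_2$ is an edge incident to neither $v_1$ nor $v_2$ (guaranteed by the two extra components of the $\psi$-construction), which deletes the surviving top and again leaves an even antichain. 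You instead run a Sprague--Grundy computation: antichains have value $k\bmod 2$, the single-top-over-antichain gadget has value $2$, disjoint sums XOR, and every option from $P$ lands in $\{1,2,3\}$, so $P$ has value $0$. Both are sound; your route is more systematic and would generalize more readily, while the paper's is more elementary and self-contained (it never needs the Grundy value of the gadget or the XOR rule for disjoint sums, only parity). One small correction to your closing remark: the winning reply in the ``private point'' case \emph{can} be named explicitly --- any surviving point below the remaining top works, since it removes that top and one bottom point, leaving an even antichain; this is exactly the paper's $\gamma(e_2)$ move, and your non-isolation observation is precisely what guarantees such a point exists.
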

\begin{proof}
Assume that either player, say Alice, chooses $\gamma(e)$, which results in an even number of incomparable points in $A$, $v_1$ and $v_2$ in $B$, and $e$ in $C$.  Bob can then respond by choosing $e$.  If Alice responds with $v_1$, then Bob can respond with $v_2$ (and vice versa), resulting in an even number of points in $A$, which is a win for Bob.  

If, however, Alice responds with a point $a \in A$, there are three cases: $a \le v_1$ and $a \le v_2$, $a \le v_1$ and $a 
\not\leq v_2$, or $a \le v_2$ and $a \not\leq v_1$.  Note that, by construction, there is no point $a$ such that $a \not\leq v_1$ and $a \not\leq v_2$.  That is, the only point that is not less than both $v_1$ and $v_2$ is $\gamma(e)$, which has already been taken by assumption.  So first assume that $a \le v_1$ and $a \le v_2$.  This would leave an odd number of elements in $A$, resulting in a win for Bob.  Consider then that $a \le v_1$ and $a \not\leq v_2$ or $a \le v_2$ and $a \not\leq v_1$.  Without loss of generality we can assume $a \le v_1$ and $a \not\leq v_2$.  Because $\psi(g)$ has at least two distinct components, each having at least one edge, there exists an edge $e_2$ that is not incident to either $v_1$ or $v_2$.  By construction, $\gamma(e_2) \le v_2$.  Thus, Bob can choose $\gamma(e_2)$, leaving only an even number of elements in $A$, resulting in a win for Bob.  

If Alice had initially chosen $e$ instead of $\gamma(e)$, then Bob could have responded with $\gamma(e)$, which leads to the same game as played as above, which was a win for Bob.  \qed
\end{proof}

\section{Main Theorem}
\label{section:main_theorem}
\begin{theorem}
\label{main_theorem}
PG is $\PSPACE$-complete.
\end{theorem}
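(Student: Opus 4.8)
The plan is to prove both containment and hardness. Containment---that PG is in $\PSPACE$---is already in hand: as noted in the introduction, it follows from the results of \cite{ito:2011,soltys:2011}, so I would simply cite this. The substance of the theorem is \PSPACE-hardness, which I would establish by showing that the composition $\varphi \circ \psi$ is a polynomial-time reduction from Node Kayles, which is \PSPACE-complete by \cite{schaefer:1978}. That the map $g \mapsto \varphi(\psi(g))$ is computable in polynomial time is immediate from the constructions, so everything reduces to the correctness claim: \emph{Player 1 wins Node Kayles on $g$ if and only if Player 1 wins the poset game on $\varphi(\psi(g))$.}

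I would first peel off the $\psi$-layer using Lemma~\ref{psi_lemma}, which says the winner of Node Kayles is unchanged under $\psi$. Hence it suffices to prove, writing $h = \psi(g)$, that the winner of Node Kayles on $h$ equals the winner of the poset game on $\varphi(h)$. The point of passing through $\psi$ is that $h$ has an odd number of edges and that every vertex has a non-incident edge, which are exactly the hypotheses the challenge lemmas exploit.

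The heart of the argument is to show that optimal play on $\varphi(h)$ simulates Node Kayles on $h$. The key observation is that Node Kayles is the game in which the two players jointly and irrevocably build an independent set, one vertex per move, the loser being whoever first faces a maximal independent set. In the poset, choosing a vertex $b \in B$ removes $b$ together with its incident edges in $C$ but leaves the rest of $B$ and all of $A$ intact, so a run in which both players only select vertices is literally a play of Node Kayles on the $B$-level---provided the chosen vertices stay independent. I would then argue that every deviation is punished. If a player selects a vertex adjacent to an already-chosen one, both endpoints of some edge $e$ are now gone, and by Lemma~\ref{successful_challenge_lemma} the opponent wins outright with $\gamma(e)$; so breaking independence is fatal. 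Conversely, launching a challenge too early is also fatal: by Lemmas~\ref{bad_challenge_lemma} and~\ref{worse_challenge_lemma}, choosing any $\gamma(e)$ (or the corresponding $e \in C$) before both endpoints of $e$ are removed is a losing move. Combining these, I would show that the player facing a maximal independent set has no good reply at all---every vertex move breaks independence and hands the opponent a winning challenge, while every move into $A$ or $C$ is a premature challenge covered by the two losing-challenge lemmas---so facing a maximal independent set is a loss, exactly as in Node Kayles. A winning Node Kayles strategy therefore lifts verbatim to the poset game (play vertices, never break independence, challenge only when the opponent already has), and the winner is preserved.

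The main obstacle I anticipate is bookkeeping the induction cleanly. All three challenge lemmas carry the hypothesis that \emph{no moves in $A$ or $C$ have yet been chosen}, so I must argue that in any line of play the ``Node Kayles phase'' genuinely precedes the first $A$- or $C$-move, and that this first such move is precisely the deciding (winning or losing) challenge, after which the residual game is merely an even pile of incomparable points in $A$ and needs no further analysis. Making precise that neither player can ever profitably interleave $A$- or $C$-moves with vertex moves---that is, that the ``play vertices until someone is cornered, then challenge'' normal form is without loss of optimality---is the step that will demand the most care, and it is exactly where the parity guaranteed by $\psi$ (odd $|E|$, hence an even residue $|E|-1$ after a challenge) does the work.
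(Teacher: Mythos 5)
Your proposal is correct and follows essentially the same route as the paper's own proof: containment via \cite{soltys:2011}, then a polynomial-time reduction from Node Kayles through $\varphi \circ \psi$, peeling off $\psi$ with Lemma~\ref{psi_lemma} and using Lemmas~\ref{successful_challenge_lemma}, \ref{bad_challenge_lemma}, and~\ref{worse_challenge_lemma} to show that breaking independence is fatal and premature challenges lose, so optimal play in $\varphi(\psi(g))$ simulates Node Kayles on $\psi(g)$. The bookkeeping concern you raise at the end is real but is treated no more formally in the paper itself, which likewise asserts the ``play in $B$ until someone is cornered'' normal form by appeal to the same three lemmas.
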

\begin{proof}
It is straightforward to check and demonstrated explicitly in \cite{soltys:2011} that PG is in $\PSPACE$.  We will next give a reduction from Node Kayles to PG to show that the latter is also $\PSPACE$-hard.  First note that $\varphi(\psi(g))$ is computable in polynomial time.

We will argue inductively that Player 1 has a winning strategy for the poset game played on $\varphi(\psi(g))$ iff Player 1 has a winning strategy for the Nodes Kales game played on $g$.  The idea behind the construction is that both players are forced to play elements in $B$ until two elements $v_1$ and $v_2$ representing adjacent vertices in $\psi(g)$ have been chosen.Ê At this point the following player can win by choosing the element $\gamma((v_1,v_2))$ in $A$.  

Assume that the poset game played on $\varphi(\psi(g))$ has been played in the prescribed manner so far.  That is, no elements from $A$ or $C$ have yet been chosen.  Lemma~\ref{successful_challenge_lemma} ensures that choosing a vertex neighboring a vertex that has already been chosen is a losing move.  Lemma~\ref{bad_challenge_lemma} and Lemma~\ref{worse_challenge_lemma} ensure that choosing any point in $A$ or $C$ before two neighboring vertices have been chosen is a losing move.  Thus, a player has a winning strategy on $\varphi(\psi(g))$ iff that player has a winning strategy on $\psi(g)$, since there is an obvious correspondence between the moves in $\varphi(\psi(g))$ and the moves in $\psi(g)$.  Lemma~\ref{psi_lemma} ensures that a player has a winning strategy on $\psi(g)$ iff he has a winning strategy on $g$.  \qed
\end{proof}

\begin{figure}[!h]
\centering
\includegraphics{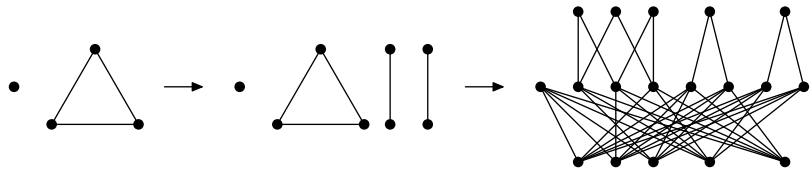}
\caption{Example of full reduction from $g$ to $\psi(g)$ to $\varphi(\psi(g))$}
\end{figure}

\section{Future Work}
Using the above theorem, it follows easily that deciding the winner of a finite poset game with any height $k \ge 3$ is \PSPACE-complete.  In contrast, determining the winner of single-level poset games is trivially obtained by considering the parity of the poset elements.  There are also polynomial time algorithms for some two-level poset games.  In \cite{fraenkel:1991}, Fraenkel and Aviezri give a polynomial time algorithm for finding the Grundy number of poset games played over a restricted class of two-level posets whose upper elements act like edges of a hypergraph.  In \cite{fenner:2012}, Fenner,  Gurjar,  Korwar, and Thierauf give a natural generalization of that algorithm and explore other possible avenues for finding the winner in polynomial time.  However, neither of these results yield a general algorithm, and the complexity of two-level poset games remains an open problem.

This work has also spawned a new \PSPACE-complete game on sets invented by Fenner and Fortnow \cite{cc_blog:2012}.  Given a collection of finite sets $S_1, \ldots, S_k$, each player takes turns picking a non-empty set $S_i$, removing the elements of $S_i$ from all the sets $S_j$. The player who empties all the sets wins.  To reduce a poset game into an instance of set-game, simply take the sets as the upper cones of the poset.  That is, each set consists of an element and all elements greater than it.  However, if the cardinality of the sets is bounded, the complexity is still open.

\section*{\large{Acknowledgements}}
I would like to thank Dr. Stephen Fenner for almost everything leading to this result.  Perpetually busy, he still always finds the time to teach me and listen to my ideas.  I am also very grateful for the support I received from the University of South Carolina Honors College and for all of those who helped me edit and refine this paper. 

\bibliographystyle{plain}
\bibliography{poset_bib}
\end{document}